\newcommand{\norm}[1]{\left\Vert{#1}\right\Vert}
\newcommand{\CRB}{Cram\'{e}r Rao Bound }
\newcommand{\Ebb}{\mathbb{E}}
\begin{document}

\title{Multi-sensor Spatial Association using Joint Range-Doppler Features}
\author{Anant Gupta,~\IEEEmembership{Student Member,~IEEE}, Ahmet Dundar Sezer,~\IEEEmembership{Member,~IEEE}, and Upamanyu Madhow,~\IEEEmembership{Fellow,~IEEE}
\thanks{This work was supported in part by the Semiconductor Research Corporation (SRC) and DARPA under the JUMP program, the Center for Scientific Computing at UCSB, and the National Science Foundation under grants CNS-1518812, CNS-1518632 and CNS-1725797.}
\thanks{A. Gupta, A. D. Sezer, and U. Madhow are with the Department
of Electrical and Computer Engineering, University of California Santa Barbara, CA USA (e-mail: anantgupta@ucsb.edu, adsezer@ucsb.edu, madhow@ece.ucsb.edu)} 
}
\markboth{For Submission to IEEE Transactions on Signal Processing}{}
\maketitle

\begin{abstract}

We investigate the problem of localizing multiple targets using a single set of measurements from a network of radar sensors.  Such ``single snapshot imaging'' provides timely situational awareness, but
can utilize neither platform motion, as in synthetic aperture radar, nor track targets across time, as in Kalman filtering and its variants.  Associating measurements with targets becomes a fundamental bottleneck
in this setting. In this paper, we present a computationally efficient method to extract 2D position and velocity of multiple targets using a linear array of FMCW radar sensors by identifying and exploiting inherent geometric features to drastically reduce the complexity of spatial association. The proposed framework is robust to detection anomalies, and achieves order of magnitude lower complexity compared to conventional methods. While our approach is compatible with conventional FFT-based range-Doppler processing, we show that more sophisticated techniques for range-Doppler estimation lead to reduced data association
complexity as well as higher accuracy estimates of target positions and velocities.


	 
\end{abstract}
\begin{IEEEkeywords}
Sensor Networks, Aggregation, Approximation Algorithms, Single Snapshot Localization
\end{IEEEkeywords}

\section{Introduction}
\IEEEPARstart{R}ECENT advances in low-cost design and fabrication enable the potential application of high-accuracy millimeter wave (mmWave) radar sensors 
to a variety of commercial sectors, including automotive, drones and robotics \cite{ti1,ti2}.  The large available bandwidths enable high range resolution, while
the small wavelength enhances Doppler and microDoppler resolution. In this paper, we explore the utility of a network of such sensors in providing timely situational awareness for highly dynamic environments, by considering estimation of the kinematic state of the scene (i.e., the positions and velocities of targets) via a single set of measurements obtained by a network of sensors. We do not rely on tracking targets across time, or on platform motion to synthesize larger apertures.   

	
The specific problem we consider is that of localizing multiple targets in a 2D scene using a linear array of radar sensors. Figure \ref{fig:lin_array_fig} shows a scenario with two targets being observed with a linear array of four spatially separated sensors positioned along \textsf{x}-axis. Each sensor collects the relative range and Doppler observations for the targets in the scene. Since these observations are not ordered {\it a priori}, each range-Doppler measurement must first be associated with a target, and then the measurements associated with a given target from multiple sensors can be used to estimate its position and velocity.  Since the number of possible associations grows exponentially in the number of sensors, it is critical to develop efficient algorithms for spatial association. It is also important to build in robustness to missed detections, since
millimeter waves can be easily occluded by objects in the scene.


\begin{figure}[htbp]
	\begin{center}
	\includegraphics[width=\columnwidth]{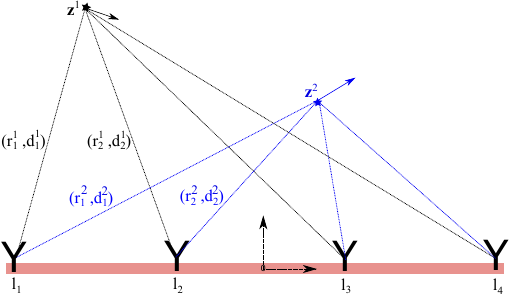}
	\caption{2D System model with a linear array of radar sensors placed on \textsf{x}-coordinates, $[l_1, l_2,l_3,l_4]$. The kinematic states $\bm{z}_1,\bm{z}_2$  of two targets are to be estimated using the unordered range and doppler observations from the sensors.}
	\label{fig:lin_array_fig}
	\end{center}
	\vspace{-0.6cm}%
\end{figure}


\subsection{Contributions}
Our goal is to develop robust and computationally efficient algorithms for single snapshot spatial data association. The main contributions of our study are as follows: \\
(1) We examine the geometric relations between instantaneous range, Doppler, and sensor locations, and show that features obtained via those geometric relations simplify the association problem. Specifically, we observe and exploit linear relationships between functions of the range-Doppler observations for a target across the linear array of sensors.\\
(2) We provide a low-complexity solution for the association problem by introducing a new graph-search based algorithm which prunes the set of feasible associations based on geometric relationships. In particular, our proposed algorithm considers a cost function based on the linear geometric relationships together with the triangle inequality constraint for the range observations at pairs of sensors and eliminates a significant number of possible associations. In addition, our approach accounts for detection anomalies such as missed detections and false alarms while reducing the complexity.\\
(3) We compare our proposed algorithm against conventional algorithms in the literature and evaluate performance in terms of localization accuracy, cardinality errors, robustness, and complexity. Also, we show that using an enhanced accuracy estimation algorithm (i.e., NOMP \cite{mamandipoor2016newtonized}) instead of conventional FFT-based approach improves localization accuracy and reduces association complexity as the number of targets and sensors increases. 

\subsection{Related Work} \label{rel_work}
The majority of prior work addresses {\it temporal} data association, which focuses on association of new measurement with existing target tracks. 
A number of techniques have been proposed in this regard, including Random Finite Set based sampling methods \cite{granstrom2017likelihood}, fuzzy clustering \cite{wang2018clustering}, and convex optimization \cite{williams2018multiple}. These methods rely on the temporal continuity of target state to assist in associating observations across multiple time frames. Most of these methods are designed for a single sensor case, and extensions to multiple sensor settings are not well-known. In this paper, our focus is on {\it spatial} association, where the data from multiple spatially separated sensors needs to be associated within the same time frame.
While the problem of spatial association studied in this paper has received relatively less attention, we provide a brief overview of the most widely used algorithms in the literature that can be extended for the spatial problem.

%
    
The association problem between a pair of sensors can be optimally solved using the well-known Hungarian algorithm \cite{kuhn1955hungarian}. However, a naive extension to multiple sensors by factorizing into pairwise (2D) associations over consecutive sensors does not work well in the presence of detection anomalies such as miss, false alarm, clutter, and close-target interactions \cite{wu2006tracking}. 
    
The multi-sensor association problem can be formulated as the Maximum A-Posteriori (MAP) estimation of most likely chain of observations across sensors. In order to solve this problem, a graphical model is defined, where a node represents sensor detection and an edge between nodes represents association hypothesis with a certain probability \cite{zhang2008global}. Association between sensors is obtained by solving the Minimum Cost Maximum Flow (MCF) problem over this graph. A variety of methods such as Linear Programming \cite{jiang2007linear}, Dynamic Programming \cite{pirsiavash2011globally, berclaz2011multiple}, and push-relabel maximum flow \cite{zhang2008global} have been proposed to efficiently solve the MCF problem. 
Although those methods solve the optimization in polynomial time, they require specialized mechanisms such as expansion of the observation set over successive iterations to resolve detection anomalies. Moreover, the complexity of the MCF problem grows quickly as $O(N^3\log N)$, where $N$ is the number of sensors \cite{zhang2008global}. 
In comparison with prior work, our approach reduces complexity by leveraging the high accuracy of sensor observations and their geometric properties. 
     
Probabilistic approaches such as the gated Nearest Neighbor (NN) \cite{bar1995multitarget} method sequentially associate observations across the sensors. At each sensor, each observation is associated with its closest match to the state predicted by the chain of observations from the past sensors. However, using only the single most likely observation to form association is vulnerable to clutter and anomalies in noisy scenarios. In addition, a single association error can cause significant contamination in final state estimate. This problem is well known in the literature on Simultaneous Localization and Mapping (SLAM), and various improvements such as Multiple Hypothesis Tracking \cite{reid1979algorithm}, K-best assignment \cite{murty1968letter}, and JPDAF \cite{bar1995multitarget} have been proposed.
In contrast, we propose an alternative search approach based on geometric fitting criteria which do not depend on such probabilistic models and avoid the contamination of state.

Bottom up approaches based on grid search over a set of candidate target states have been suggested in the literature \cite{folster2005data}. In \cite{venkateswaran2012localizing}, an approach based on enumerating all possible candidates followed by pruning and merging shows promising results. Randomized adaptive search procedures such as random consensus sampling (RANSAC) \cite{fischler1981random}, Interpretation Tree \cite{grimson1987localizing}, Joint Compatibility Branch and Bound \cite{neira2001data} have been shown to mitigate the impact of detection anomalies. These methods utilize a suitably defined metric to check the consistency of a set of associated observations, and employ branch and bound type search strategies to reduce the search complexity. Our graphical approach uses similar pruning techniques to perform the graph search,
but with the additional use of geometric constraints and a geometric fitting error metric for guiding the search.

\vspace*{5pt}
\noindent  \textbf{Outline:}
The rest of the paper is organized as follows. In Section \ref{prob_desc}, we introduce the association problem in the single snapshot localization setting. In Section \ref{graphical_association}, our graph association algorithm is presented. Then, the proposed algorithm is evaluated over different system parameters in Section \ref{sim_results} and Section \ref{conclusion} concludes the paper.
   
\vspace*{5pt}
\noindent   \textbf{Notation:}
$a, \bm{a}, A, \mathscr{A}$ represent scalar, vector, matrix and set respectively. We use $[.]$ to construct vector, matrix and $\{.\}$ to construct set. $\times, \cup, \cap$ denote the cartesian product, union and, intersection of two sets and $\varnothing$ denotes a \emph{NULL} value. $n(\mathscr{A})$ represents the number of non-empty elements in set $\mathscr{A}$. $\circ$ denotes element-wise multiplication between vectors. $A^T$ denotes transpose of matrix $A$ and $\wedge$ denotes logical ``and'' operator.

\newcommand{\NS}{N_S} 
\newcommand{\NT}{N_T} 

\section{Problem Description}
\label{prob_desc}

\subsection{System Model}
Consider a linear array of $\NS$ radar sensors in a two-dimensional (2D) scene with $\NT$ targets as in Figure~\ref{fig:lin_array_fig}. Without loss of generality, we assume that the sensor array is static and located along \textsf{x}-axis and centered at origin. The absolute kinematic state of the targets can be obtained by using the target location relative to this sensor array along with its own odometer information.

The kinematic state (i.e., instantaneous position and velocity information of all targets) of the scene is given by
\begin{align*}
    \mathcal{Z} = \{ \bm{z}^k \}_{k=1}^{\NT}
\end{align*}
where $\bm{z}^k = (x^k,y^k,v_x^k,v_y^k)$ is the kinematic state of target $k$ with an instantaneous velocity of $(v_x^k,v_y^k)$
at position $(x^k,y^k)$.

The range-Doppler of target $k$ observed at sensor $i$, can be expressed in terms of the desired kinematic state as follows,
    \begin{align}
    \label{range_doppler_transform}
         r_{i}^k=\sqrt{(x^k-l_i)^2+(y^k)^2},\quad d_i^k = \frac{(x^k-l_i)v_x^k+y^kv_y^k}{r_i^k} \,.
    \end{align}
 where $l_i$ is the \textsf{x}-coordinate of sensor $i$. We denote this non-linear mapping as $(r_i^k,d_i^k) = \mathcal{T}_i(\bm{z}^k)$.

\subsection{Single Snapshot Localization} 
In order to extract range and Doppler information of target $k$, each sensor $i$ uses the signal (i.e., $m_i^{obs}(t)$) reflected back from the scene in monostatic mode. In this study, we focus on localization of the scene using a single snapshot. For that reason, the kinematic state of the scene is assumed to be constant for a certain time interval and the scene localization is performed based on the range and Doppler information gathered during that time interval. Based on those, the Maximum Likelihood Estimator (MLE) for the scene including all $\NS$ sensors can be expressed as,
\begin{align}
\hat{\mathcal{Z}}_{ML} &= arg\max_{\mathcal{Z}} \prod_{i=1}^{Ns} \mathcal{L}\left(m^{obs}_i|\mathcal{T}_i(\mathcal{Z})\right) \label{general-llr}
\end{align}
where $m_i^{obs}$ corresponds to the observed signal in a single snapshot and $\mathcal{L}\left(m^{obs}_i|\mathcal{T}_i(\mathcal{Z})\right)$ is the conditional log likelihood of the observed signal for scene $\mathcal{Z}$. 

The optimization problem in \eqref{general-llr} is difficult in general since the number of targets (i.e., $\NT$) is not known and a brute force search for $\mathcal{Z}$ incurs exponential complexity in the number of targets; that is, $n(\mathcal{D}(\bm{z}))^{\NT}$ for a grid $\mathcal{D}(\bm{z})$. In addition, the observations contain a variety of anomalies such as clutter, missed detections, and false alarms, which further complicates the solution. 
    

In order to facilitate the solution of the problem in \eqref{general-llr}, the problem is divided into two stages as follows:

\subsubsection{Estimation}
The Range-Doppler pairs of $M_i \leq \NT$ non-occluded targets are estimated from received signal $m_i^{obs}$ at sensor $i$ using efficient algorithms proposed in the literature \cite{gupta2016super}.
The estimate at sensor $i$ for $k^{th}$ target can be modeled as follows,
\begin{subequations}
\label{assoc_model}
\begin{align}
    (r_i^{k}) = (r_i^k)^{true} + w_{i}^R + \tilde{b}^{k}_{i} \label{assoc_model_eq1}\,,\\
    (d_i^{{k}}) = (d_i^k)^{true} + w_{i}^D + \bar{b}^{k}_{i} \label{assoc_model_eq2}
\end{align}
\end{subequations}
where $w_{i}^R \sim \mathcal{N}(0,\sigma_{r_i}^2)$ and $w_{i}^D \sim \mathcal{N}(0,\sigma_{d_i}^2)$ denote independent Gaussian distributed noises with zero mean and $\tilde{b}^{k}_{i}$ and $\bar{b}^{k}_{i}$ denote the bias errors introduced due to proximity with any other $M_i-1$ targets in the scene. 
The noise variance depends on estimation accuracy at the given SNR which, in turn, depends on target radar cross section (RCS), path loss, and antenna directivity. For simplicity, we assume equal received signal power across all targets in the scene.
 
 We denote the set of estimated range-Doppler pairs at sensor $i$ by $\Theta_i = \{\bigcup_{k=1}^{M_i}\bm{\theta}_i^k \} $ where $\bm{\theta}_i^k = \left[(r_i^j),(d_i^j)\right]^T$. Index $k\in [1,M_i]$ in $\theta_i^k$ denotes the index of $k^{th}$ measurement with respect to the $M_i$ measurements for sensor $i$, whereas, index $j$ in $[(r_i^j ),(d_i^j )]^T$ denotes the global index of the $j^{th}$ target. The different superscripts are used to highlight the fact that the order of targets for which the range-Doppler measurements are obtained at the sensors is unknown. Indeed, our aim in this work is to find the correct ordering/association of the range-Doppler measurements.
 
\subsubsection{Association problem} \label{MDAP}
The estimation of kinematic state $\mathcal{Z}$ requires the association of those un-ordered range-Doppler pairs, $\Theta_i$, collected across all sensors. 
An association chain is defined as the ordered set of range-Doppler observations, $\mathscr{A}:\{ \{ \bm{\theta}_i \}_{i=1}^{\NS}| \bm{\theta}_i \in \tilde{\Theta}_i\}$ which is constructed from the \emph{NULL} augmented sets; that is, $\tilde{\Theta}_i = \Theta_i \cup \varnothing$. $\bm{\theta}_i=\varnothing$ corresponds to the \emph{NULL} state, which represents absence of observation at sensor $i$ (e.g., due to missed detection). Figure~\ref{fig:graph_assoc} shows a graphical representation of an association problem with three targets observed using $\NS=4$ sensors. Sensors $1$ and $2$ observe all targets $M_1=M_2=3$, sensor $3$ misses target $\bm{z}^2$ and sensor $4$ has a false alarm. The desired association chain for target $\bm{z}^1$ across four sensors is shown in the shaded region.

The spatial association problem can be formulated as the following maximum a posteriori (MAP) estimation problem,
    \begin{align}
       \bm{\mathscr{A}}^* =& \underset{\bm{\mathscr{A}} \subset \tilde{\Theta}_1 \times \cdots \times \tilde{\Theta}_{\NS}} {\text{argmax}} \log P(\bm{\mathscr{A}}) P(\Theta|\bm{\mathscr{A}}) \label{map-form} \\
       & \quad\text{such that} \quad \mathscr{A}^i \cap \mathscr{A}^j = \varnothing \quad \forall i \neq j, \quad n(\mathscr{A}^k)\geq 2 \nonumber
    \end{align}
where $\Theta = \bigcup_{i=1}^{\NS} \Theta_i$ denotes the set of all range-Doppler observations, $\bm{\mathscr{A}} = \{\mathscr{A}^1, \mathscr{A}^2, \cdots \}$ denotes a subset of association chains chosen from the set of all possible potential chains, $\tilde{\Theta}_1\times \tilde{\Theta}_2\times \cdots \tilde{\Theta}_{\NS}$.
The optimal solution $\bm{\mathscr{A}}^*$ consists of the set of chains which jointly maximizes overall log likelihood while the constraints ensure that no two chains share a common observation and each chain contains at least two observations.

When the targets are well-separated, the bias terms in \eqref{assoc_model_eq1} and \eqref{assoc_model_eq2} vanish and the likelihood for the individual targets becomes independent across multiple targets. In this case, the log likelihood in \eqref{map-form} simplifies to
    \begin{align*}
       \log P(\bm{\mathscr{A}}) P(\Theta|\bm{\mathscr{A}}) &= \sum_{\mathscr{A}\in \bm{\mathscr{A}}} \log P(\mathscr{A}) + \log P(\Theta|\mathscr{A})
    \end{align*}
where $P(\Theta|\mathscr{A})=\prod_{i=1}^{\NS} P(\Theta_i|\mathscr{A})$ is the probability of detecting the range-Doppler pairs which can be modeled by a Bernoulli distribution,
    \begin{align*}
    P(\Theta_i|\mathscr{A})= \begin{cases}
    \alpha & \text{, if target missed at sensor } i\,, \mathscr{A}_i=\varnothing\\
    1-\alpha & \text{, else }
    \end{cases}
\end{align*}     
where $\alpha$ denotes the probability of detection errors in \eqref{LLR} and is set to nominal value $\alpha=0.05$ \cite{zhang2008global}. This model accounts for the occurrence of both miss and false alarms across the sensors in the likelihood, which is given by 
\begin{align} \nonumber
 P(\Theta|\mathscr{A}) = \alpha^{\NS-n\left(\mathscr{A}\right)}(1-\alpha)^{n(\mathscr{A})}\,.
\end{align}
   
Also, $P(\mathscr{A})$ is the likelihood of chain modeled using the perceived range-Doppler pairs, $(\hat{r}_i, \hat{d}_i)=\mathcal{T}_i\left(\hat{\bm{z}}\right)$ for a target state $\hat{\bm{z}}_{k}$ predicted by the chain (see Section \ref{state-prediction}). By ignoring the constant terms which preserve the MAP solution, we define the normalized negative log likelihood as follows,
    \begin{align}
    \label{LLR}
        \mathcal{L}(\mathscr{A}) =& \sum_{\bm{\theta}_i\in\mathscr{A}}\left(\frac{(\hat{r}_i-r_i)^2}{\sigma_r^2} + \frac{(\hat{d}_i-d_i)^2}{\sigma_d^2}\right) \nonumber\\
        &+n(\mathscr{A})\log \frac{\alpha}{1-\alpha}
    \end{align}
where $\bm{\theta}_i=[r_i,d_i]^T$ is the observation from $i^{th}$ sensor in association chain $\mathscr{A}$ and $\sigma_r^2$ and $\sigma_d^2$ are the nominal variance terms for range and Doppler, respectively (see Appendix~\ref{crb_rd} for details). The first term in \eqref{LLR} denotes the squared error between the estimated and observed range-Doppler pairs in the chain while the second term penalizes the selection of smaller chains which prevents formation of duplicate chains for the same target. Hence, the association problem is reduced to the following constrained minimization problem:
    \begin{align}
    \bm{\mathscr{A}}^* =& \underset{ \bm{\mathscr{A}} \subset \tilde{\Theta}_1 \times \cdots \times \tilde{\Theta}_{\NS}} {\text{argmin}} \sum_{\mathscr{A} \in  \bm{\mathscr{A}}}\mathcal{L}(\mathscr{A}) \label{llr-form-joint} \\
       & \quad\text{such that} \quad \mathscr{A}^i \cap \mathscr{A}^j = \varnothing \,\, \forall i \neq j, \quad n(\mathscr{A})\geq 2 \nonumber
    \end{align}
 The joint minimization problem over all potential association chains in \eqref{llr-form-joint} is difficult in general. For that reason, we use an iterative approach where the most likely chains of observations are identified and removed from observation set $\Theta$ sequentially, 
 \begin{align}
    \underset{ {\mathscr{A}} \in \tilde{\Theta}_1 \times \cdots \times \tilde{\Theta}_{\NS}} {\text{argmin}} \mathcal{L}(\mathscr{A}) \quad\text{such that} \quad n(\mathscr{A})\geq 2 \,. \label{llr-form} 
    \end{align}
Without any prior knowledge of association between the nodes, the number of potential chains  $\Theta_1\times \Theta_2\times \cdots \Theta_{\NS}$ still grows exponentially. 
However, the formulation in \eqref{llr-form} enables the utilization of various network optimization methods to identify the most likely chain. Once the associated chains of range-Doppler observations are found across sensors, the kinematic state of the scene can be easily obtained by solving the inverse kinematic problem $[\hat{x},\hat{y},\hat{v}_x,\hat{v}_y]=\mathscr{T}^{-1}(\mathscr{A})$ using Gauss-Newton algorithm \cite{gupta2016super}.

%

\section{Graphical Association}\label{graphical_association} 
In order to solve the association problem in \eqref{llr-form}, we formulate the spatial association problem using graphical models and present our low-complexity graphical search method to obtain association chains efficiently via geometric relations.
    
\subsection{Graph Generation} \label{sec:graph_gen}
\begin{figure}[bp]
\vskip -1em
  \begin{center}
    \includegraphics[width=0.8\columnwidth]{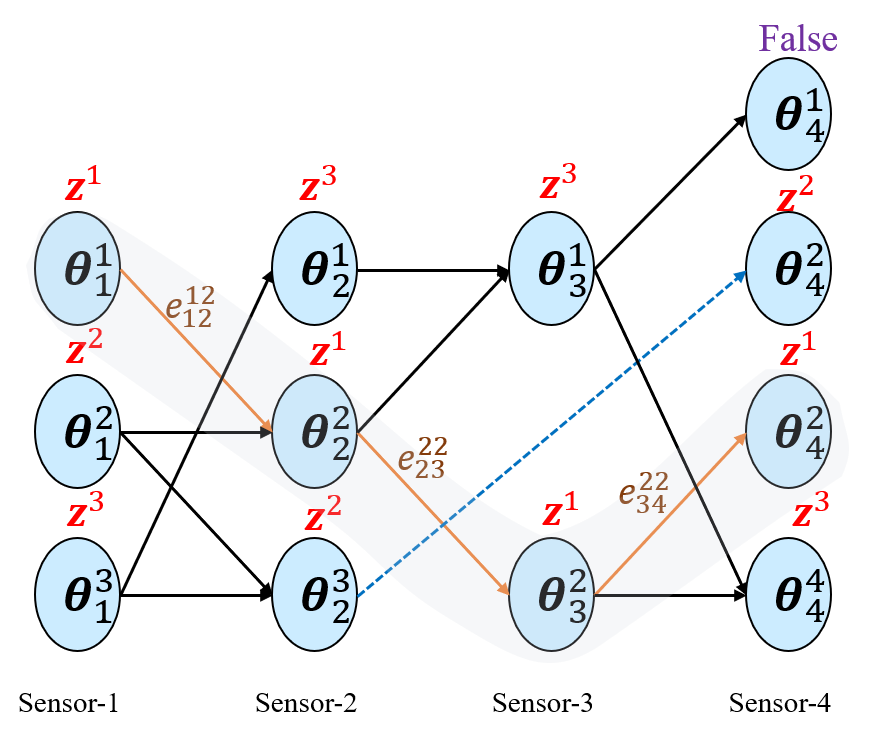}
    \caption{Target-based observation graph for a scene with 3 targets and 4 sensors. Sensors $1,2$ observe all 3 targets in different orders. Sensor 3 misses the observation of target state $\bm{z}^2$ while sensor 4 contains a false observation. Desired association chain, $\mathscr{A}$ is shown by the shaded set of nodes.}
    \label{fig:graph_assoc}
  \end{center}
\end{figure}
	
To begin with, we define a target-based graph to perform data association with following elements:
\begin{itemize}
    \item Node $\bm{\theta}_i^k$ represents the $k^\text{th}$ range-Doppler pair at sensor $i$. Nodes for a given sensor are arranged along a single column of the graph as shown in Figure \ref{fig:graph_assoc}.
    \item Edge $e_{ij}^{kl}=[\bm{\theta}_i^k, \bm{\theta}_j^l]$ denotes the linkage between pairs of observation across sensor $i$ and sensor $j$, which can correspond to a feasible target $\bm{z}_{ij}^{kl}$ referred to as ``\emph{candidate}'' location. 
    \item Chain $\mathscr{A}^j$ is represented by the sequence of two or more nodes spanning distinct sensors, which is associated to a single target, $\hat{\bm{z}}^{j}$.
\end{itemize}

\textbf{Geometric Constraint:} A significant portion of the edges can be easily discarded in the graph generation phase by using the following geometric constraint on target's range (for noiseless case),
\begin{align}
\label{prune1}
  C_{\textrm{G}}(e_{ij}):\quad (r_i - r_j < l_{ij})  \wedge (r_i + r_j > l_{ij})
\end{align}
where $l_{ij}=|l_i-l_j|$ represents the separation between sensor $i$ and sensor $j$.

Graph $\mathscr{G}=(\mathscr{V},\mathscr{E})$ is initialized with vertices for all range-Doppler pairs $\mathscr{V}=\left\{\Theta_i\right\}_{i=1}^{\NS}$ and edges $\mathscr{E}$ between any two consecutive nodes that satisfy condition $C_{\textrm{G}}(e_{i-1,i}^{k,l}), \forall k \in [1,M_{i-1}], \forall l \in [1,M_i]$ for all $i \in \{2,\ldots,\NS\}$ given in \eqref{prune1}.

\subsection{Spatial Association using Geometric Features}
In this subsection, we describe the solution of the association problem presented in \eqref{llr-form} using graph $\mathscr{G}$ by exploiting geometric relations between range, Doppler, and sensor geometry. For clarity of exposition, we focus on the association procedure of a single target $\bm{z}=[x,y,v_x,v_y]$ and, therefore, drop the superscript $k$ for the sake of simplicity.
    
\subsubsection{Geometric Relations}
  The range of target observed at $i^{th}$ sensor is given by 
  \begin{align}
  	\label{r2fit}
      r_i=\sqrt{(x-l_i)^2 + (y)^2}\,.
  \end{align}
  The Doppler component is the rate of change of range and it is given by,
  \begin{align}
      d_{i} = \dot{r}_{i} &= \frac{(x-l_i)v_x + yv_y}{r_{i}}\nonumber\\
      r_{i}d_{i} &= (x-l_i)(v_x) + yv_y\label{rdfit}\,.
  \end{align}
  For a linear array of sensors, the range and Doppler measurements for a target satisfy the following relations based on \eqref{r2fit} and \eqref{rdfit}:
  \begin{subequations}
  \label{rd_eq}
  \begin{align}
      r_{i}^2 &= r_j^2 - 2x(l_{i}-l_j)+(l_i^2-l_{j}^2) \label{r2quad}\\
      r_id_i &= r_jd_j - (v_x)(l_{i}-l_j)  \label{rdlin}
  \end{align}
  \end{subequations}
  where $r_i$ ($r_j$) and $d_i$ ($d_j$) are the range and Doppler estimated at the $i^{th}$ ($j^{th}$) sensor, respectively. $l_i$ ($l_j$) is the \textsf{x}-coordinate of $i^{th}$ ($j^{th}$) sensor. \eqref{r2fit} and \eqref{rdfit} indicate that for the noiseless setting, the range-Doppler products and range squared are linear with respect to target's velocity and position at \textsf{x}-coordinate, respectively. Therefore, the correct associations can be identified by fitting the observations to those geometric relations.
   
\subsubsection{State Prediction and Fitting Error} \label{state-prediction}
  The presence of noise in $(r_i,d_i)$ causes high error in these geometric relations due to the quadratic dependence. An estimate of target state parameters $\hat{x}, \hat{v}_x$ can be obtained by minimizing that error between observed and predicted range and Doppler values. Let $\bm{q}_1=\left[r_id_i | (r_i,d_i)\in \mathscr{A}\right]$ and $\bm{l}= \left[l_i| \bm{\theta}_i\in \mathscr{A}\right]$ denote the vector of range-Doppler products using observations in chain $\mathscr{A}$ and the vector of corresponding sensor \textsf{x}-coordinates, respectively. Predicted fit $\hat{\bm{q}}_1$ can be expressed using the geometric relation in \eqref{rdlin} as follows:
  \begin{align*}
  \hat{\bm{q}}_1 = -{v}_x\bm{l}+\kappa_1\mathbbm{1}=H\bm{s}_1
  \end{align*}
  where $H=[\bm{l}, \mathbbm{1}]$, $\bm{s}_1 = \left[-{v}_x \,\, \kappa_1\right]^T$, and $\kappa_1$ is a constant.
  Then, the least squares estimate for $\bm{\hat{s}}_1$ is obtained as
  \begin{align}
  \hat{\bm{s}}_1 &= arg\min_{\bm{s}_1} \qquad  \norm{\bm{q}_1 - H\bm{s}_1}^2 \label{max-fit-error}\\
  & =(H^TH)^{-1}H^T\bm{q}_1 \,. \nonumber
  \end{align}

Therefore, the least squares estimate is obtained as $\hat{v}_x=\bm{u}^T\bm{q}_1$ where $\bm{u}=-H(H^TH)^{-1}\bm{e}_1$ and $\bm{e}_1=[1,0]^T$.

Similarly, let $\bm{q}_2=\left[r_i^2 | (r_i,d_i)\in \mathscr{A}\right]$ denote the vector of range squared observations in chain $\mathscr{A}$, predicted fit $\hat{\bm{q}}_2$ can be expressed using the geometric relation in \eqref{r2quad} as follows:
\begin{align*}
\hat{\bm{q}}_2 -  \bm{l}\circ\bm{l} = -2{x}\bm{l}+\kappa_2\mathbbm{1}=H\bm{s}_2
\end{align*}
where $\bm{s}_2 = \left[-2x \,\, \kappa_2\right]^T$ and $\kappa_2$ is a constant. The least squares estimate of $\hat{x}$ is obtained as
\begin{align}
\hat{\bm{s}}_2 &= arg\min_{\bm{s}_2} \qquad \norm{\bm{q}_2-\bm{l}\circ\bm{l} -H\bm{s}_2}^2 \label{max-fit-error2}\\
&=(H^TH)^{-1}H^T(\bm{q}_2-\bm{l}\circ\bm{l}) \,. \nonumber
\end{align}
Hence, we obtain $\hat{x}=\bm{u}^T(\bm{q}_2-\bm{l}\circ\bm{l})/2$. 

The remaining state parameters (i.e., $\hat{y}$ and $\hat{v}_y$) are obtained using the geometric relations in \eqref{r2fit} and \eqref{rdfit} as
    \begin{align*}
        \hat{y}&=\sqrt{\frac{1}{n\left(\mathscr{A}\right)}\sum_{\bm{\theta}_i \in \mathscr{A}}\left(r_i^2-(\hat{x}-l_i)^2\right)}\,,\\             
        \hat{v}_y &=\frac{1}{n(\mathscr{A})}\sum_{\bm{\theta}_i \in \mathscr{A}}\frac{r_id_i-(\hat{x}-l_i)\hat{v}_x}{\hat{y}} \,.
    \end{align*}

The normalized geometric fitting error of a chain $\mathscr{A}$ can be computed using these estimates as follows:
    \begin{align} \label{err_fit}
      F(\mathscr{A}) & = \frac{\norm{\bm{q}_1-\hat{\bm{q}}_1}^2}{\eta_1} + \frac{\norm{\bm{q}_2-\hat{\bm{q}}_2}^2}{\eta_2} \\\label{err_fit2}
      &= \!\begin{multlined}[t]
        \frac{\norm{\left( I-H(H^TH)^{-1}H^T \right)\bm{q}_1}^2}{\eta_1} \\
        + \frac{\norm{\left(I-H^T(H^TH)^{-1}H\right)(\bm{q}_2-\bm{l}\circ\bm{l})}^2}{\eta_2}
      \end{multlined}
    \end{align}
where $\eta_1$ and $\eta_2$ are normalization constants that are set based on CRB (see Appendix~\ref{chi_squared_approx} for details) and \eqref{err_fit2} is obtained by substituting the predicted fits into \eqref{err_fit}. It is important to note that the error in \eqref{err_fit2} is additive over the observations in chain $\mathscr{A}$. Therefore, the extension of the chain cannot reduce the fitting error. In other words, $F(\mathscr{A})$ is monotonically non-decreasing over the length of chain $\mathscr{A}$. For that reason, the fitting error provides a simple measure of the geometric consistency of a chain, which can be used to traverse the graph and extract the chains efficiently.

\subsubsection{Geometric Association} \label{geometric-association}
We now present a graph search procedure which obtains the associated chains by minimizing geometric fitting error $F(\mathscr{A})$ in \eqref{err_fit2} and negative log likelihood $\mathcal{L}(\mathscr{A})$ in \eqref{LLR}. We apply the geometric relations by adding constraints on the desired chain, $\mathscr{A}$ to the optimization problem in \eqref{llr-form} as follows,
  \begin{subequations}\label{constraints}
  \begin{align}
   \underset{{\mathscr{A}} \in \tilde{\Theta}_1 \times \cdots \times \tilde{\Theta}_{\NS}} {\text{min}} &\mathcal{L}(\mathscr{A}) \nonumber\\
    \text{such that} \quad & n(\mathscr{A})\geq \gamma, \label{length_costraint}\\
     & F(\mathscr{A})< \tau^{n(\mathscr{A})}_{f} \label{cost_constraint}
  \end{align}
  \end{subequations}
The constraint in \eqref{length_costraint} restricts the number of missed observations to be less than $\NS-\gamma$ and the constraint in \eqref{cost_constraint} only allows chains with good geometric fit to be selected. In order to provide a solution for the optimization problem in \eqref{constraints}, we perform Depth First Search (DFS) over the graph generated in Section \ref{sec:graph_gen} to extract the chains, where those additional constraints help in reducing the search complexity. Our complete Spatial Association using Geometry Algorithm (SAGA) is outlined in Algorithm \ref{relax-assoc}. Here is a brief description:
  \begin{enumerate}
  \item We start the graph search by setting $\gamma=\NS$ so that only chains that include observations from all sensors are extracted. For that reason, we consider a graph having edges between consecutive sensors only. This helps to reduce the chains encountered during initial DFS procedure (see Appendix~\ref{consecutive-ambiguity} for details). 

  \item The DFS is guided by geometric fitting error $F(\mathscr{A})$. After each node is visited, the fitting error of candidate chain is calculated and the chain is ignored if it has a fitting error higher than predefined threshold $\tau_f^{\NS}$. Since the fitting error is non-decreasing over the length of the chain, most of the candidate chains are eliminated before reaching at the end of the graph, which reduces the complexity further. Details of DFS are shown in Appendix~\ref{dfs-algo}. At the termination of the DFS, the corresponding chain of nodes is added to solution $\bm{\mathscr{A}}^{\dagger}$ if it satisfies all the constraints in \eqref{constraints} and the negative log-likelihood of the association chain is below a predefined threshold (i.e., $\mathcal{L}(\mathscr{A})< \tau^{n(\mathscr{A})}_{l}$). The nodes belonging to the selected chains are removed from the graph together with their corresponding edges to keep subsequent chains disjoint. 

  \item  In order to deal with missed detection cases at sensors, the minimum chain length constraint (i.e.,$\gamma$) is relaxed in steps upto robustness level $\rho$. Due to that relaxation, the graph includes not only the edges between consecutive sensors but also the edges among the nodes that skip over $h$ consecutive sensors. Those edges are called \emph{Skip}-$h$ edges where $h = \NS - \gamma$. Then, the DFS procedure is repeated for different minimum chain length constraints. Consequently, in this procedure, \emph{NULL} states are taken into account and the generated chain does not include any observation from a sensor that misses the corresponding target by skipping over the observations of that sensor via \emph{Skip}-$h$ edges. In addition, the DFS procedure implicitly accounts for \emph{NULL} state in the beginning and end of a chain by starting searching from different nodes in consideration of minimum chain length constraint.

  \item The thresholds (i.e., $\tau_f$ and $\tau_l$) for the geometric fitting error and the likelihood depend on length of the chain $n(\mathscr{A})$ and their initial value is set based on CFAR criteria (see Appendix~\ref{chi_squared_approx} for details). Using a tight initial threshold ${\tau_f}$ for $F(\mathscr{A})$ restricts the number of branches to be explored at each node to a smaller set. This reduces the initial complexity of DFS while allowing only a subset of association chains $\bm{\mathscr{A}}^{\dagger} \subset \bm{\mathscr{A}}^*$ to be found. The thresholds are later relaxed by a factor of $\beta>1$ to allow the observations contaminated with noise to be selected. The relaxation is stopped when no further chains with length $n(\mathscr{A})\geq \NS-\rho$ exist in the graph.
  \end{enumerate}

  \begin{algorithm}
  \caption{Spatial Association using Geometric Assistance}\label{relax-assoc}
    \begin{algorithmic}[1]
    		\Statex{\textbf{Input:} Graph $\mathscr{G}$, Robustness level $\rho$}
        \State \textsc{Initialize} Chains $\bm{\mathscr{A}}^{\dagger}=\emptyset, \bm{\tau} =[{\tau_f, \tau_l}]_{init}$
        \Repeat
        		\State \textsc{Remove} all Skip edges
        		\For{$h=0$ to $\rho$}
        			\State Set minimum chain length: $\gamma=\NS-h$
        			\State \textsc{Add Skip-$h$ edges to graph $\mathscr{G}$}
        			\For{$v \in \mathscr{V}$ }
        				\State \textsc{DFS} from node $v$: $\mathscr{A} \gets \textit{GA-DFS}(v,\gamma,\bm{\tau}$)
					\If {Valid Chain, $\mathscr{A}$ is found}
						\State $\bm{\mathscr{A}}^{\dagger} \gets\mathscr{A}$ 
        					\State Remove chain from graph $\mathscr{V}=\mathscr{V}-\{\mathscr{A}\}$
        				\EndIf
        			\EndFor
         			 
        		\EndFor
			\State Relax thresholds: $\bm{\tau} \gets \beta{\tau}$	    	
		\Until{Chains with length $n(\mathscr{A})\geq \NS-\rho$ exists in $\mathscr{G}$}
    \Statex \textbf{Output:} Selected chains $\bm{\mathscr{A}}^{\dagger}$
    \end{algorithmic}
\end{algorithm}%


\textbf{Robustness:} 
During chain length relaxation, a \emph{Skip} edge is added between the observations across sensor $i$ and sensor $q$ if
\begin{enumerate}
  \item Observations $\bm{\theta}_i$ and $\bm{\theta}_q$ satisfy the geometric constraint  $C_{\textrm{G}}(e_{iq})$ in \eqref{prune1}, and,
  \item The target state predicted by $\bm{\theta}_i$ and $\bm{\theta}_q$ differs by a predefined threshold $\tau_z$ from the ones predicted by using all observations on the paths that connect $\bm{\theta}_i$ and $\bm{\theta}_q$. 
  \begin{align} \label{C2}
      C_{\textrm{S}}(e_{iq}):\quad \norm{\hat{\bm{z}}_{\mathscr{A}_p}-\hat{\bm{z}}_{iq}} > \tau_{z}\,, \forall \mathscr{A}_p:\left\{\bm{\theta}_i,\bm{\theta}_q\right\}\in\mathscr{A}_p
  \end{align}
  where $\mathscr{A}_p$ is in the form of $\mathscr{A}_p=\{\bm{\theta}_i,\bm{\theta}_j,\cdots,\bm{\theta}_q\}$ with $\bm{\theta}_i$ and $\bm{\theta}_q$ at the edges of the path, $\tau_z$ is set based on CRB (see Appendix~\ref{crb_pv_derivation}), $\hat{\bm{z}}_{iq}$ indicates the predicted target state based on $\bm{\theta}_i$ and $\bm{\theta}_q$, and $\hat{\bm{z}}_{\mathscr{A}_p}$ shows the predicted target state using the observations in $\mathscr{A}_p$. 
\end{enumerate} 
Enforcing the condition in \eqref{C2} avoids the formation of multiple chains corresponding to the same target and avoids unnecessary increase in the number of edges. The number of skip connections introduced in the graph is controlled by the robustness level; that is, $0\leq \rho \leq (\NS-2)$, which sets the maximum number of missed detections that can be tolerated across the sensor array. In this way, addition of such edges provides a flexible mechanism to provide robustness against missed detection in the sensors while keeping search space in control.
     
\textbf{Complexity:} 
The non-decreasing property of $F(\mathscr{A})$ is used to discard unlikely chains in the early stages of DFS. This allows for rapid extraction of associations without requiring search over all possible chains in the graph. The minimum track length threshold, $\gamma$, is reset to its maximum value after each relaxation. Therefore, the skip edges in the graph can be removed at the end of the inner loop to reduce search complexity further. Therefore, our approach exploits the geometric structure of observations across multiple sensors to reduce search complexity. 

\subsection{Spatial Association using Edge-based State Likelihoods} Before evaluating the performance of our main algorithm, we describe an iterative search method, which relies on the fact that an approximate kinematic state estimate can be derived by using two connected observations in a graph. In other words, a state estimate can be obtained for each edge in a graph, which is a part of the association chain $\mathscr{A}$. Therefore, the search space for the association problem in \eqref{llr-form} can be reduced to the set of edges.  

The likelihood of a candidate $\bm{z}_e$ corresponding to an edge $e\in \mathscr{E}$ can be computed as, 
  \begin{align}\label{phantom-llr}
  \mathcal{L}(\bm{z}_e) = \sum_{i=1}^{\NS}\left[ \min_{\bm{\theta}\in\Theta_i} \left(\frac{({r}_i'-r_i)^2}{\sigma_r^2} + \frac{({d}_i'-d_i)^2}{\sigma_d^2}\right)\right]
  \end{align}
where $[r_i',d_i']=\mathscr{T}_i(\bm{z}_e)$ is the perceived range and Doppler at sensor $i$ for target state $\bm{z}_e$.
Then, the most likely candidate can be selected by evaluating \eqref{phantom-llr} over all edges and choosing the one that achieves the minimum negative log likelihood; that is,  $\bm{z}^* = \bm{z}_{e^{*}}$ for $e^{*} = arg\min_{e\in\mathscr{E}} \mathcal{L}(\bm{z}_e)$. Then, the observations associated with $\bm{z}^*$ can be identified via the following neighborhood constraint:
  \begin{align*}
    \mathscr{N}(\bm{z}^*) = \bigcup_{i=1}^{\NS}\{ (r_i,d_i) | (r_i-r_i^*) \leq \delta_r \wedge (d_i-d_i^*)\leq \delta_d \}
  \end{align*}
where $[r_i^*,d_i^*]=\mathscr{T}_i(\bm{z}^*)$ are the perceived range-Doppler at sensor $i$ and $\delta_r$ and $\delta_d$ are the range and Doppler resolution parameters defined in Appendix~\ref{crb_rd}. The algorithm carrying out this Spatial Association using Edge-based State Likelihoods (\emph{SAESL}) procedure is presented in Algorithm \ref{spatial-bruteforce}.
  \begin{algorithm}
  \caption{SAESL Algorithm}
  \label{spatial-bruteforce}
  \begin{algorithmic}[1]
      \State \textsc{Initialize Graph with observations $\Theta$}: $\mathscr{G}=(\mathscr{V},\mathscr{E})$
      \State \textsc{Augment Graph} with skip edges 
      \For{$h=0$ to $\rho$} 
    	\State \textsc{Add Skip-$h$ edges to graph $\mathscr{G}$}
      \EndFor
      \State \textsc{Initialize} $\mathscr{Z}=\emptyset$
      \While{$\mathscr{E}\neq\emptyset$}
          \State \textsc{Find most likely Candidate}, $\mathscr{Z}\gets\bm{z}^*$ 
          \Statex \hspace{0.5cm}from edge $\bm{z}^*=arg\min_{e\in\mathscr{E}} \mathcal{L}(\bm{z}_e)$
          \State \textsc{Remove all vertices explained by} $\bm{z}^*$, 
          \Statex \hspace{0.5cm}{$\mathscr{V}\gets\mathscr{V}-\mathscr{N}(\bm{z}^*)$}
          \State \textsc{Update edges $\mathscr{E}$} 
      \EndWhile
      \State \textsc{Return} Selected candidates $\mathscr{Z}$
      \end{algorithmic}
  \end{algorithm}%
        
Since all edges in the graph are checked while selecting the candidates, this approach exhibits higher complexity than our proposed algorithm. Moreover, evaluation of state likelihood $\mathcal{L}(\bm{z}_e)$ in \eqref{phantom-llr} is more expensive than evaluation of chain likelihood $\mathcal{L}({\mathscr{A}})$ in \eqref{LLR} as it involves a minimization over all other observations. In Section~\ref{sim_results}, we use this algorithm as a benchmark against our proposed algorithm.

\section{Simulation Results} \label{sim_results}

In this section, we evaluate the performance of the proposed spatial association algorithm, \emph{SAGA} against the \emph{SAESL} algorithm through various performance metrics.  We consider a linear array of $\NS$ FMCW radar sensors each of which collects range and Doppler observations from the scene. The FMCW radar system parameters are set based on the ones that are used in typical low cost automotive systems at mm-Wave frequencies \cite{wagner2013wide}: bandwidth $B=0.5$ GHz, carrier frequency $f_c=60$ GHz, chirp duration $78 \mu s$, $N_{ch}=64$ chirps, and sampling rate $f_s=0.82$ MHz.  This provides range and Doppler resolutions of $\delta r = 0.3$ and $\delta d = 0.5 m/s$, respectively, and maximum range and Doppler of $19.2m$ and $\pm 16 m/s$, respectively, suitable for short range situational awareness applications. 
In the simulations, a single snapshot of the scene is considered with multiple targets having equal received SNR at all sensors. The kinematic states of targets are randomly selected based on uniform distributions $x\sim\mathcal{U}(-8m, 8m), y\sim\mathcal{U}(2m, 12m), v_x \sim \mathcal{U}(-10m/s, 10m/s),v_y \sim \mathcal{U}(-10m/s, 10m/s)$. 

It is important to note that when range and Doppler separation between two targets gets small, the estimation algorithm either provides a merged estimate or results in detection anomalies such as miss and false alarm. In order to differentiate the scenes with such estimation errors due to range-Doppler proximity, we consider two different scenarios with two different scenes. The \emph{well-separated} scene is generated by enforcing a minimum separation between the range and Doppler of the targets at all sensors.
The \emph{adverse} scene does not have such constraints and contains additional missed detection anomalies by randomly removing measurements from the sensors with probability $P_{miss}$. 
Unless stated otherwise, the nominal values of system parameters are presented in Table~\ref{tab:simParameters}. 

	\begin{table}[!hbt]
		\begin{center}
		\caption{Simulation Parameters}
		\label{tab:simParameters}
		\begin{tabular}{|c|c|}
			\hline
			{Number of targets} & $\NT=20$ \\
			\hline
			{Number of radar sensors} & $\NS=6$ \\
			\hline
			{ SNR} & $-10$ dB\\
			\hline
			{ Sensor Array Width} & $L_W=4$ m\\
			\hline
			Simulated misses & $P_{miss}=0.05$\\
			\hline
			{Robustness Level} & $\rho=4$\\
			\hline
			{Max error threshold} & $\bar{d}=0.16$ m\\
			\hline
		\end{tabular}
		\end{center}
		\vspace{-0.6cm}
	\end{table}

\subsection{Localization Accuracy}
In this subsection, we analyze the localization accuracy of kinematic state estimates obtained using associated sensor observations. This depends on the accuracy of underlying range-Doppler estimates. The position and velocity estimation errors for state estimates $\hat{\bm{\mathcal{Z}}}$ are computed as follows:
    \begin{align*}
    D_p(\hat{\bm{\mathcal{Z}}}) = \frac{1}{n(\hat{\bm{\mathcal{Z}}})}\sum_{\hat{\bm{{z}}}\in \hat{\bm{\mathcal{Z}}}} \min_{\bm{z}\in \bm{\mathcal{Z}}^\text{true}} d_p(\bm{z}, \hat{\bm{{z}}})^2 \\
        D_v(\hat{\bm{\mathcal{Z}}}) = \frac{1}{n(\hat{\bm{\mathcal{Z}}})}\sum_{\hat{\bm{{z}}}\in \hat{\bm{\mathcal{Z}}}} \min_{\bm{z}\in \bm{\mathcal{Z}}^\text{true}} d_v(\bm{z}, \hat{\bm{{z}}})^2
    \end{align*}
where $d_p(\bm{z}, \bm{{z}}')=\sqrt{(x-x')^2 + (y-y')^2}$ and $d_v(\bm{z}, \bm{{z}}')=\sqrt{(v_x-v_x')^2 + (v_y-v_y')^2}$ are the errors in position and velocity, respectively. The CRBs for Range-Doppler and Position-Velocity estimates are evaluated in Appendix~\ref{crb_rd} and Appendix~\ref{crb_pv_derivation}, respectively. Figure \ref{fig:rd_snr} shows the Root Mean Square Error (RMSE) in range-Doppler estimated at sensor level for different number of targets in a \emph{well-separated} case. We observe that range-Doppler RMSE at individual sensors achieves CRB at a SNR$= -15\,$dB threshold. The RMSE for position-velocity estimates obtained from sensor observations also achieve their CRB at the same SNR threshold. This shows that association using \emph{SAGA} does not introduce any additional errors to the localization process when SNR is above this threshold. However, the RMSE increases sharply below the SNR threshold due to the difficulty in associating noisy range-Doppler pairs. Therefore, we use nominal $SNR=-10$ dB in our simulations to perform further analysis. 
    \begin{figure}[t]
		\begin{center}
		\includegraphics[width=\columnwidth]{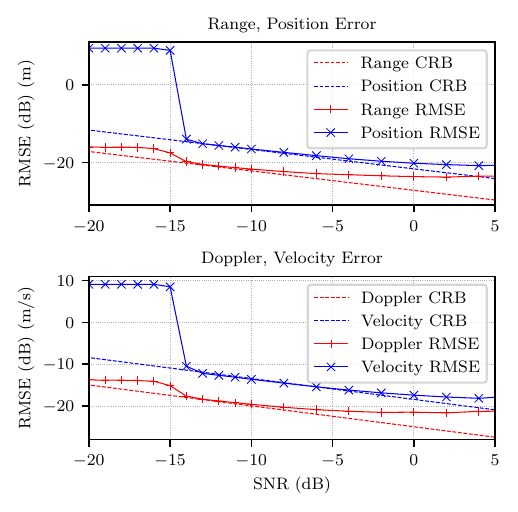}
		\vspace{-0.7cm}
		\caption{Range-Dopper estimation accuracy and Position-Velocity estimation accuracy versus SNR. The position-velocity RMSE converges to the CRB bound as SNR increases and the SNR at which this convergence occurs is called as SNR threshold. The SNR threshold provides an indicator for the localization performance when multiple targets, $\NT>1$ are present.}
		\label{fig:rd_snr}
		\end{center}
		\vspace{-0.6cm}
	\end{figure}
	
\textbf{Cardinality Error and OSPA}:  For multiple targets, the number of \emph{valid} targets identified by the system is also an important performance metric.  An estimated target $\hat{\bm{z}}$ is classified to be \emph{valid} only if it lies within a region ``close'' to the true targets, $\min_{\bm{z}\in \bm{\mathcal{Z}}^\text{true}}\norm{\hat{\bm{z}}-\bm{z}}<\bar{d}$ where $\bar{d}$ sets the maximum error threshold. The cardinality error is defined as the difference between actual number of targets and the number of estimated target; that is, $\NT-N_e=|\bm{\mathcal{Z}}^\text{true}| - |\bm{\hat{\mathcal{Z}}}|$. That error is caused due to the detection anomalies in the estimation algorithm at sensor level as well as during the association stage. In such cases, the localization accuracy by itself does not capture the true performance of the system. Therefore, we use the optimal subpattern assignment (OSPA) metric \cite{ristic2011metric}, which combines the localization and cardinality error into a single performance metric and is given by
	    \begin{align}\nonumber
	    \texttt{OSPA}(\hat{\bm{\mathcal{Z}}})=\sqrt{ \frac{1}{n(\hat{\bm{\mathcal{Z}}})}\left(\sum_{i=1}^m \min\left({d_c(\hat{\bm{{z}}_i}),\bar{d}}\right)^2 + |N_e-\NT|\bar{d}^2 \right) }
	    \end{align}
where $m$ is the number of \emph{valid} targets, $N_e-\NT$ is the cardinality error and, $d_c(\hat{\bm{{z}}_i}) $ is the localization error computed relative to \emph{closest} true target given as
\begin{align*}
d_c(\hat{\bm{{z}}_i}) = \min_{\bm{z}\in \bm{\mathcal{Z}}^\text{true}} d_p(\bm{z}, \hat{\bm{{z}}_i})^2 + d_v(\bm{z}, \hat{\bm{{z}}_i})^2\,.
\end{align*}
    
Figure~\ref{fig:OSPA_Nob} shows the OSPA error along with the localization and cardinality errors with increasing scene density in the \emph{well-separated} case. Both localization error and cardinality error start to increase as the scene gets denser. The \emph{SAGA} and \emph{SAESL} schemes have comparable localization error for each target that
is validated, but SAGA underestimates the number of targets (i.e., $N_e<N_T$) while SAESL overestimates it (i.e., $N_e>N_T$).
Since the localization error is computed only over the reduced set of \emph{valid} targets, we compute the OSPA metric which effectively combines both quantities.
The OSPA metric is dominated by localization error when the scene is sparse and cardinality errors when the scene is dense. We observe that \emph{SAGA} has slightly worse overall performance compared to \emph{SAESL} as the number of targets increases at low $SNR=-15$dB. This is because the geometric fit that SAGA relies on is impaired at lower SNR.
As we increase SNR to $-10$dB, the geometric fit and hence performance of SAGA improves,
reducing the performance difference with SAESL. Moreover, \emph{SAGA} obtains the association with significantly lower complexity than \emph{SAESL}, as discussed in the next section. 
    
\begin{figure}[htbp]
	\begin{center}
	\label{OSPA_Nob}
	\includegraphics[width=\columnwidth]{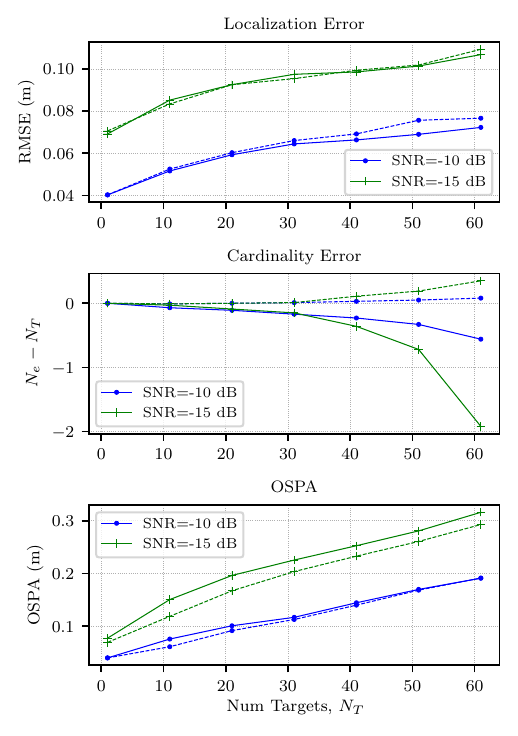}
	\vspace{-0.7cm}
	\caption{Overall localization accuracy versus number of targets at $SNR=-15,-10$ dB. The solid and dotted lines represent the performances of \emph{SAGA} and \emph{SAESL} association algorithms, respectively. }
	\label{fig:OSPA_Nob}
	\end{center}
	\vspace{-0.6cm}
\end{figure}
	
\subsection{Complexity Reduction} \label{complexity-reduction}
In this section, we analyze the computational savings achieved by the proposed \emph{SAGA} algorithm and provide comparison against traditional approaches. Figure~\ref{fig:graph_nodes} shows that the geometric pruning criteria in \eqref{prune1} reduce the number of edges in the graph by an order of magnitude as $N_T$ increases. Also, the worst case complexity of \emph{SAGA} can be expressed in terms of the number of association chains visited over the graph. We observe that the number of chains visited is lower than the number of total pruned edges in the graph and lies close to the lower bound $N_S N_T$. This shows the effectiveness of geometric features in solving the association problem with significantly lower complexity than the worst-case $N_T^{N_S}$ complexity.

In order to effectively compare the performance, we now consider adverse scenes in which the sensor observations contain detection anomalies. 
When the miss probability is low, \emph{SAGA} rapidly extracts all chains. As the number of missed detections increases, the robust scheme automatically increases the number of iterations by allowing relaxation of constraints in DFS graph search. In contrast, \emph{SAESL} always requires a large number of iterations. 
    
\emph{SAGA} provides robustness to missed detections by selectively adding skip edges to the graph. This mechanism reduces the OSPA error in adverse scenarios at the expense of increased computational complexity. The level of robustness can be tuned using a parameter $\rho$ which is set based on the adversity of the scene.  Figure~\ref{fig:graph_nodes} also shows the estimation performance for different robustness levels with increasing scene adversity (i.e., increasing miss detections). OSPA error reduces with higher robustness levels. However, low robustness level (e.g., $\rho=1$) is sufficient to obtain good performance at typical miss detection probability $P_{miss}<0.05$. Similarly, a higher robustness level helps to reduce the cardinality errors when the scene contains higher number of targets. The highest robustness level is $\rho=4$, which corresponds to the minimum chain length constraint in \eqref{length_costraint} with $n(\mathscr{A})\geq 2$.

\begin{figure}[tbp]
	\begin{center}
	\includegraphics[width=\columnwidth]{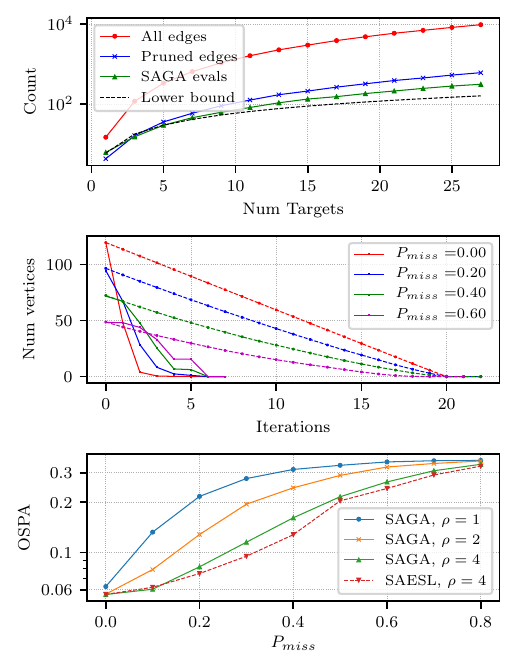}
	\vspace{-0.7cm}
	\caption{(Top) Graph edge count and number of association chains visited by SAGA algorithm. (Middle) Graph size at end of each iteration of association algorithm for different $P_{miss}$. \emph{SAGA} is denoted by solid line while \emph{SAESL} is denoted by dotted line.  (Bottom) OSPA versus $P_{miss}$ with different robustness levels $\rho$}
	\label{fig:graph_nodes}
	\end{center}
	\vspace{-0.6cm}
\end{figure}
	
\textbf{Runtime Comparison:} We now compare the computational complexity of our approach against \emph{SAESL}. Computing the number of operations that occur during association is difficult, since the number of chains visited depends on a variety of factors such as the fitting error thresholds and minimum chain length. However, given the same sensor estimates for the simulated scenes, we compare the relative complexities of \emph{SAGA} against other methods in Figure~\ref{fig:Cmplx_Nob-Rob} in terms of total number of Floating Point operations (FLOPS) and the total runtime. We observe that \emph{SAGA} exhibits an order of magnitude lower complexity compared to the \emph{SAESL} algorithm.  Moreover, this improvement increases as the number of targets increases, which highlights the advantage of our approach. In addition, as we increase the robustness level (e.g., from $\rho=0$ to $\rho=4$),  
the relative increase in complexity for the proposed \emph{SAGA} algorithm is far less than for the \emph{SAESL} approach.

\begin{figure}[htbp]
	\begin{center}
	\includegraphics[width=\columnwidth]{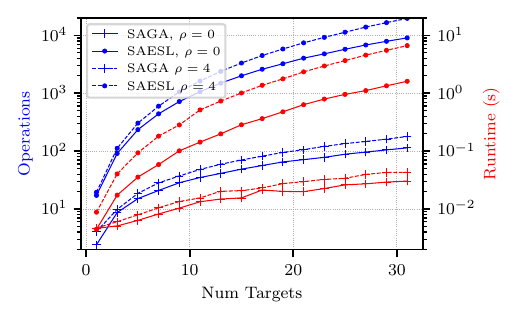}
	\vspace{-0.7cm}
	\caption{Association complexity versus the number of targets averaged over 100 trials using nominal parameters with robustness levels $\rho=0$ and $\rho=4$. Total number of FLOPS is denoted by blue line while the runtime is in red.}
	\label{fig:Cmplx_Nob-Rob}
	\end{center}
	\vspace{-0.6cm}
\end{figure}
	
We also compare the complexity against traditional methods such as gated Nearest neighbor filter (\emph{NN}) and Minimum cost flow (\emph{MCF}). The NN association scheme \cite{bar1995multitarget} builds the association chain by starting with a local kinematic state estimate from a pair of sensor observations and sequentially adding the nearest measurement from other sensors to update this state. The MCF association scheme \cite{zhang2008global} identifies the most likely set of chain by solving the minimum cost maximum flows over the graph. The cost of each edge is set based on its relative likelihood similar to our \emph{SAESL} method. We use an optimized implementation \cite{ortools} of \emph{MCF} for comparison purposes.
     
In order to compare the complexity of those algorithms, we count the number of times that the primary objective function (i.e., the likelihood cost in \eqref{llr-form}) is computed during the graph search procedure. Figure~\ref{fig:comparison_SOTA_complex} compares the complexity across algorithms as a function of the number of targets. The proposed \emph{SAGA} algorithm requires the lowest number of likelihood evaluations--significantly lower than for the naive \emph{SAESL} iterative search method. This shows our algorithm can effectively predict the correct chain using the geometric fitting criteria. The \emph{MCF} and \emph{NN} algorithms have similar complexities, lying between those of \emph{SAGA} and \emph{SAESL}.

	\begin{figure}[!t]
		\begin{center}
		\includegraphics[width=\columnwidth]{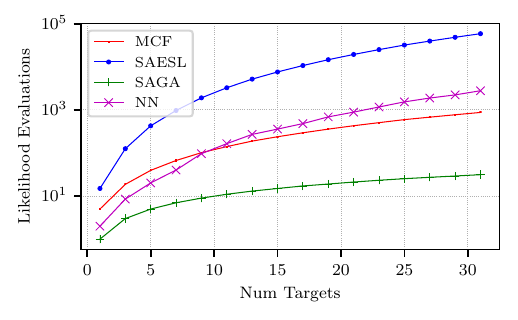}
		\vspace{-0.7cm}
		\caption{Number of evaluations of Likelihood\, $\mathcal{L}(\mathscr{A})$ with increasing number of targets. \emph{SAGA} has the lowest complexity across all scene densities while the complexities \emph{NN} and \emph{MCF} lie between those of \emph{SAGA} and \emph{SAESL}.}
		\label{fig:comparison_SOTA_complex}
		\end{center}
		\vspace{-0.6cm}
	\end{figure}

Figure~\ref{fig:comparison_SOTA_runtime} shows the overall runtime of the algorithms as the scene density increases. We observe that \emph{SAGA} is faster than the other methods by an order of magnitude. Since, the FLOPS count is not available for these other methods, we only compare the overall runtime, which follows a similar trend and provides a reasonable estimate of algorithmic complexity.
\begin{figure}[htbp]
		\begin{center}
		\includegraphics[width=\columnwidth]{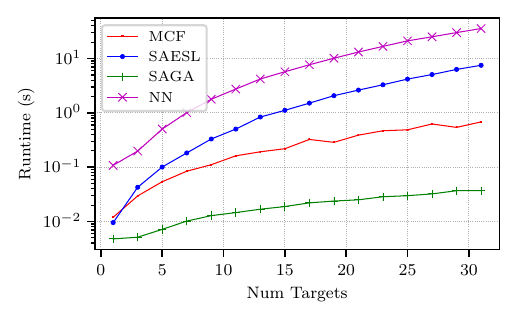}
		\vspace{-0.7cm}
		\caption{Runtime comparison with traditional algorithms.}
		\label{fig:comparison_SOTA_runtime}
		\end{center}
		\vspace{-0.6cm}%
\end{figure}
	

\subsection{Benefit of Super-Resolution} \label{dis:super}
Our algorithm extracts the geometric relationships between range-Doppler measurements based on the sensor array geometry and builds association chains by adding likely observations at new sensors to existing chains. In this section, we investigate the role of enhanced accuracy of range and Doppler estimates obtained using the NOMP \cite{mamandipoor2016newtonized} super-resolution algorithm in spatial association compared to coarse estimates obtained using DFT. Figure~\ref{fig:est_comparison_Nob} compares the localization and cardinality errors. We see that the localization accuracy using NOMP estimates achieves the CRB when the number of targets is moderate, whereas DFT has higher RMSE as expected. However, the RMSE of NOMP deviates away from CRB as the number of targets increases and approaches the accuracy of DFT-based estimates for dense scenes. 

It is important to note that our association algorithm works even with the coarse DFT-based estimates. However, NOMP provides an accuracy boost at the {\it input} of the association algorithm, which enables identification of more targets and results in lower cardinality errors compared to DFT.

The increased accuracy of NOMP estimates also results in smaller association time relative to DFT, due to the reduction in geometric fitting errors.
This reduction in association time comes, of course, at the expense of additional computation during range-Doppler estimation.	
Figure~\ref{fig:comparison_Nsens} compares the runtime of the estimation and association stages with different number of sensors for $N_T=20$ targets. We observe that the association time with NOMP estimates is $10$ times lower than the one with DFT estimates, while the estimation overhead is about $2-3$ times higher. 
Figure~\ref{fig:comparison_Nsens} shows that the computational complexity of association starts dominating that of estimation
as the number of targets and sensors increases. Thus, the overall complexity reduction due to NOMP-based estimation, relative to DFT-based estimation, becomes more pronounced
with a denser scene and a larger number of sensors.
\begin{figure}[htbp]
	\begin{center}
	\includegraphics[width=\columnwidth]{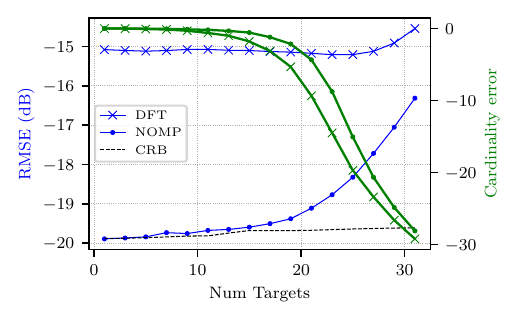}
	\vspace{-0.7cm}
	\caption{Estimation accuracy (thin) and cardinality error (thick) versus number of targets at SNR = -15 dB.}
	\label{fig:est_comparison_Nob}
	\end{center}
	\vspace{-0.6cm}
\end{figure}
\begin{figure}[htb]
	\begin{center}
	\includegraphics[width=\columnwidth]{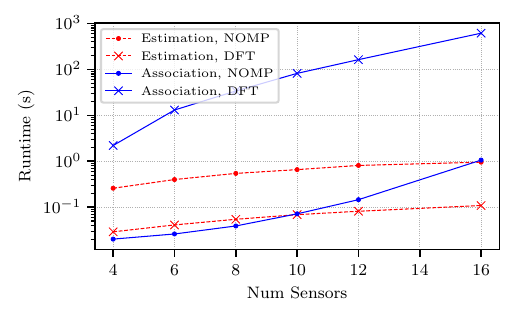}
	\vspace{-0.7cm}
	\caption{Runtime comparison of association (solid) \& estimation (dotted) stages versus number of sensors.}
	\label{fig:comparison_Nsens}
	\end{center}
	\vspace{-0.6cm}
\end{figure}
	

\subsection{Array Geometry} \label{placement}
In this subsection, we analyze the localization performance of linear sensor arrays from the perspective of data association.  We consider the adverse scene with $P_{miss}=0.2$ to emphasize our findings. The array width and the number of sensors affect both localization accuracy and association complexity.

Increasing the array width generates more spatial diversity in range-Doppler measurements across sensors. This helps to reduce the OSPA error for a given number of sensors. On the other hand, larger distance among the sensors weakens the pruning criteria for the graph edges used in \eqref{prune1}, resulting in a denser graph with a higher number of potential associations between sensors. Therefore, the overall localization performance improves with wider arrays at the expense of slightly more association complexity. The available sensor width is an important design constraint in practical applications (e.g., the length of the side profile of a vehicle). We therefore analyze the effect of the number of sensors, keeping the array width fixed to
 $L_W=4$ m.
\begin{figure}[b]
	\begin{center}
	\includegraphics[width=\columnwidth]{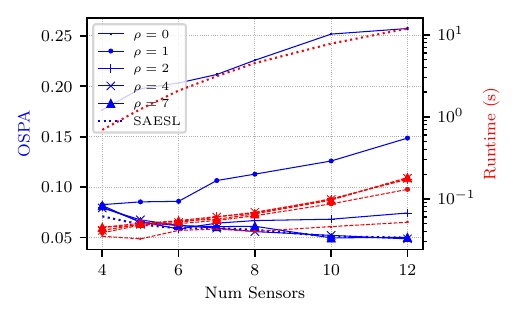}
	\vspace{-0.7cm}
	\caption{Association versus number of sensors for SAESL (thick dotted) and SAGA (thin solid) with different robustness levels.}
	\label{fig:design_Nsens_ospa}
	\end{center}
	\vspace{-0.6cm}
\end{figure}
	       
We find that increasing the number of sensors improves association performance as well as association complexity. Figure \ref{fig:design_Nsens_ospa} shows OSPA versus number of sensors for \emph{SAESL} and \emph{SAGA}. While the OSPA for \emph{SAESL} association decreases monotonically with the number of sensors, we observe that the OSPA for \emph{SAGA} with robustness level $\rho$ achieves minimum OSPA with $\NS=\rho+3$ sensors, and increases for $\NS>\rho+3$. This is due to missed observations preventing the formation of chains with minimum length constraint $\NS-\rho$. For an array with $\NS$ sensors and a robustness level of $\rho$, the expected number of missed targets can be expressed as
  \begin{align*}
  \Ebb[\mathrm{miss}] = \sum_{k=1}^{\min{(\NS-2,\rho+1)}} { \genfrac(){0pt}{}{\NS}{k}} P_{miss}^k (1-P_{miss})^{\NS-k}\,.
  \end{align*}
Figure \ref{fig:design_Nsens_card} shows that the number of missed targets observed using our approach closely matches this expected value for various values of $\rho$ and $\NS$. 

\begin{figure}[t]
\begin{center}
	\includegraphics[width=\columnwidth]{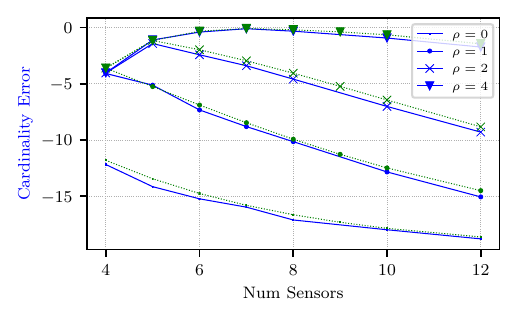}
	\vspace{-0.7cm}
	\caption{Comparison of simulated (solid) and theoretical (dotted) cardinality error.}
	\label{fig:design_Nsens_card}
\end{center}
\vspace{-0.6cm}
\end{figure} 

Thus, as we increase the number of sensors, while we improve localization accuracy, we must increase the robustness level used in the \emph{SAGA} algorithm
(setting it to $\rho = N_s - 3$) in order to avoid increase in cardinality errors. While this does result in increased computational complexity, it is still significantly lower than that of 
the \emph{SAESL} algorithm. We leave as an open issue the design of more sophisticated methods for selection of a subset of sensors during the association stage to reduce the complexity further.



\section{Conclusion} \label{conclusion}

We have shown that simple constraints relating range-Doppler observations to sensor geometry can be exploited to significantly reduce the complexity of spatial association. Our system-level simulations demonstrate that the proposed framework for spatial association based on these geometric constraints is robust to noisy observations and detection anomalies, and that it scales well with the number of sensors and targets. Our approach is compatible with standard
FFT-based range-Doppler processing, but enhanced accuracy estimation at each sensor (i.e., super-resolution of range and Doppler) significantly improves both localization accuracy and association complexity. The geometric constraints used to simplify the association problem rely on a linear placement of the sensor array, which is reasonable, for example,
when the sensors are placed along the side, front or back of a vehicle.  An interesting open question is whether such geometric concepts can be extended to simplify association for more general sensor array configurations. Additional
important topics for future investigation include extending these ideas to more complex target models (e.g., for extended targets, and targets causing both specular and diffuse reflection), and combining them with complementary strategies utilizing platform and/or target motion across multiple snapshots.


	
\appendices
\newtheorem{theorem}{Theorem}
\newtheorem{corollary}{Corollary}
\newtheorem{lemma}{Lemma}

\section{CRB for Range and Doppler} \label{crb_rd}
The \CRB provides an estimation theoretic lower bound on the sample covariance of range-Doppler estimates; that is, $Cov(\bm{\theta}_i) \geq I(\bm{\theta}_i)^{-1}$ where $I(\bm{\theta}_i)$ is Fisher Information Matrix (FIM) given by,
\begin{align}
I(\bm{\theta}_i)&=\Ebb\left[\left({\nabla_{\bm{\theta}_i} \mathcal{L}(m_i^{obs}|\bm{\theta}_i)}\right)\left({\nabla_{\bm{\theta}_i} \mathcal{L}(m_i^{obs}|\bm{\theta}_i)}\right)^H\right]\nonumber
\end{align}
where $\mathcal{L}(m_i^{obs}|\bm{\theta}_i)$ is the log likelihood of the observed signal for a given target range-Doppler $\bm{\theta}_i$. 
For an FMCW radar, this expression simplifies to \cite{gupta2016super},
\begin{align}
I(\bm{\theta}_i)&=  \kappa \left(\frac{A^2}{\sigma^2}\right) \begin{bmatrix}
1/{\delta_r^2} & 0\\
0 & 1/\delta_d^2
\end{bmatrix} \label{FIM-exp2}
\end{align}
where $\kappa$ is a constant, and $\delta_{r}$ and $\delta_{d}$ are the Rayleigh range and Doppler resolutions, respectively. $\frac{A^2}{\sigma^2}$ in \eqref{FIM-exp2} is the SNR of the received signal, $m_i^{obs}(t)$ with $m_i^{obs}(t)=Ae^{j\phi_{FMCW}(t)}+w(t)$ at sensor $i$, where $w(t)\sim \mathcal{N}(0,\sigma^2)$.
We set the nominal variance of range-Doppler estimates based on the value of CRB at $SNR=-20$ dB; that is, $\sigma_{r_i} = \frac{\delta_r\sigma}{\kappa A}$ and $\sigma_{d_i} = \frac{\delta_d\sigma}{\kappa A}$. 
\section{CRB for Position and Velocity} \label{crb_pv_derivation}
Using the range-Doppler model in Section \ref{assoc_model}, we evaluate the single target CRB for kinematic parameters $\bar{\bm{z}}$ using the log likelihood of range-Doppler observations $\mathscr{A}=\{\bm{\theta}_i\}_{i=1}^{\NS}$ given kinematic state $\bar{\bm{z}}$, which is
  \begin{align*}
  \mathcal{L}\left(\{\bm{\theta}_i|\bar{\bm{z}}\}_{i=1}^{\NS}\right) =& \sum_{i=1}^{\NS}\left( \frac{(\bar{r}_i-r_i)^2}{\sigma_{r_i}^2} + \frac{(\bar{d}_i-d_i)^2}{\sigma_{d_i}^2}\right) 
  \end{align*}
where $\bm{\theta}_i=(r_i, d_i)$ is the observed range-Doppler pair for sensor $i$, $(\bar{r}_i, \bar{d}_i)=\mathcal{T}_i(\bar{\bm{z}})$ is true range-Doppler pair for given target state $\bar{\bm{z}}$ and $\sigma_{r_i}^2$ and $\sigma_{d_i}^2$ are, respectively, the range and Doppler CRBs obtained in \eqref{FIM-exp2}. The FIM for $\bar{\bm{z}}$ can be evaluated as
  \begin{align*}
  I(\bar{\bm{z}})&=\Ebb\left[\nabla_{\bm{z}} \mathcal{L}\left(\{\bm{\theta}_i|\bar{\bm{z}}\}_{i=1}^{\NS}\right) \right] \,.
  \end{align*}
The CRB obtained from inverse FIM is used to find position and velocity CRB as follows,
  \begin{align*}
  CRB_p &= I(\bar{\bm{z}})^{-1}_{(1,1)} + I(\bar{\bm{z}})^{-1}_{(2,2)}\,,\\
  CRB_v &= I(\bar{\bm{z}})^{-1}_{(3,3)} + I(\bar{\bm{z}})^{-1}_{(4,4)}\,.
  \end{align*}
The CRB of velocity is a function of both range and Doppler variances whereas the CRB of position only depends on the variance of range. We use the nominal range and Doppler CRB values to set the minimum separation distance threshold, $\tau_z= 10 \sqrt{CRB_p+CRB_v}$ between targets. This threshold is also used to check similarity between chains in the association algorithm.
\section{Association Constraint Relaxation}
\label{chi_squared_approx}
The choice of initial stopping thresholds $\tau_f^n$ and $\tau_{l}^n$ and scaling factor $\beta$ for subsequent relaxations in \emph{SAGA} algorithm governs the total complexity of association algorithm. In order to initialize the association algorithm, we set tight thresholds for $\mathscr{L(A)}$ and $\mathscr{F(A)}$. Assuming the range-Doppler observations have small error (i.e., $w^R_{i}\ll r_i,\,w^D_{i} \ll d_i$ in \eqref{assoc_model}), the expected negative log likelihood in \eqref{LLR} can be approximated as
		\begin{align*}
		\mathcal{L}(\mathscr{A}) \approx & \sum_{\bm{\theta}_i \in \mathscr{A}}\left(\frac{(w^R_{i})^2}{\sigma_r^2} + \frac{(w^D_i)^2}{\sigma_d^2}\right) \,.
		\end{align*}
Since $w^R_i\sim \mathcal{N}(0,\sigma_r^2)$ and $w^D_i\sim \mathcal{N}(0,\sigma_d^2)$ are standard Normal distributed random variables, $\mathcal{L}(\mathscr{A}^k)$ has chi-squared distribution, $\chi^2_{2n(\mathscr{A})}$ with $2n(\mathscr{A})$ degrees of freedom. Then, the expected fitting error in \eqref{err_fit} can be approximated as
		\begin{align}
		F(\mathscr{A}) &= \sum_{\bm{\theta}_i \in \mathscr{A}} \frac{\left((\hat{r}_i\hat{d}_i)-(r_id_i)\right)^2}{\eta_1}
									+ \frac{\left((\hat{r}_i)^2-(r_i)^2\right)^2}{\eta_2}  \nonumber \\
									&\approx \sum_{\bm{\theta}_i \in \mathscr{A}} \frac{\left(r_iw^D_i + d_iw^R_i\right)^2}{\eta_1}
									+ \frac{\left(2r_iw^R_i\right)^2}{\eta_2} \label{fit_cost_approx}
		\end{align}
where $(\hat{r}_i$ and $\hat{d}_i)$ denote the perceived range-Doppler pair at sensor $i$ for predicted state $\bm{z}$ and $\bm{\theta}_i=({r}_i, {d}_i)$ denotes the observed range-Doppler pair at sensor $i$. The normalization factors $\eta_1, \eta_2$ are set to the variance of numerator terms which is,
        \begin{align*}
        (\eta_1)_i &= \text{Var}[r_iw^D_i + d_iw^R_i] \approx \sigma_{r_i}^2d_i^2 + r_i^2\sigma_{d_i}^2 + \sigma_{r_i}^2\sigma_{d_i}^2\\
        (\eta_2)_i &= \text{Var}[2r_iw^R_i] \approx 4r_i^2\sigma_{r_i}^2\,.
    \end{align*}
Using those values to normalize \eqref{fit_cost_approx} results in $\mathcal{F}(\mathscr{A}^k) \sim \chi^2_{2n(\mathscr{A})}$ being chi-squared distributed with $2n(\mathscr{A})$ degrees of freedom. Hence, the thresholds for the association algorithm are determined as follows,
\begin{align*}
\tau_f^{n(\mathscr{A})} : \mathrm{Pr}(F(\mathscr{A})>\tau_f^{n(\mathscr{A})})=P_\mathrm{FA}\\
\tau_l^{n(\mathscr{A})} : \mathrm{Pr}(\mathcal{L}(\mathscr{A})>\tau_l^{n(\mathscr{A})})=P_\mathrm{FA}
\end{align*}     
where $P_\mathrm{FA}$ is the nominal false alarm rate set to $P_\mathrm{FA} = 0.01$.

Note that while the normalization factors $\eta_1,\eta_2$ depend on $r_i, d_i$, we set this based on the maximum range, Doppler values to get a conservative initial value. This does not cause a problem since the sucessive relaxation procedure loosens that threshold so that chains with high fitting error can be extracted.

The relaxation factor, $\beta$ should be set appropriately. Choosing a high value causes faster convergence but might lead to false chains being identified. On the other hand, a low value delays the extraction of \emph{loose} chains. In the simulations, we find that $\beta=2$ performs well.
		

%

\section{Depth First Search}   \label{dfs-algo} 
A depth first search algorithm is outlined in Algorithm~\ref{ga-dfs}. At each node, the DFS procedure traverses through all branches which have geometric fitting error below the maximum error threshold $\tau_f^{\NS}$. On reaching the end of the graph, we select the chain if it satisfies the likelihood, fitting error, and minimum chain length constraints. In addition, we check for possible chain termination at each node after going through all its branches. This step implicitly accounts for the NULL state at the end of a chain.
     
\begin{algorithm}
  \caption{Geometry Assisted Depth First Search}
  \label{ga-dfs}
  \begin{algorithmic}[1]

  \Procedure{GA-DFS}{$v, \mathscr{A}, \gamma, \bm{\tau}$}
  	\State {Get list of children of $v$ that \emph{geometrically fit}, \Statex \hspace{1cm}$B(v) =\left\{v_j:\mathcal{F}([\mathscr{A}, v_j])<\tau_f^{\NS}\right\}$}
  	\If {$B(v) \neq \emptyset$}
  		\State Sort $B(v)$ using geometric fitting error, $F([\mathscr{A}, v_j])$
  		\For{ $v_j  \in B(v)$}
  			\State \textsc{Branch} out a new chain $\mathscr{A}^j:\mathscr{A}\gets v_j$
  			\State $\mathscr{A}^o\gets$\Call{GA-DFS}{$v_j, \mathscr{A}^j, \gamma, \bm{\tau}$}
  			\State Exit loop if valid chain $\mathscr{A}^o$ is found.
  		\EndFor
  	\EndIf
  	\State \textsc{Check if chain can be terminated at $v$}
  	\If {$n(\mathscr{A})\geq\gamma, \mathcal{L}(\mathscr{A})<\tau^{n(\mathscr{A})}_{l}, F(\mathscr{A})< \tau^{n(\mathscr{A})}_{f}$}
  		\State \textsc{Select} $\mathscr{A}^o\gets \mathscr{A}$,
  	\EndIf
  	\State \textbf{Output: }$\mathscr{A}^o$
  \EndProcedure
  \end{algorithmic}
\end{algorithm}%
\vspace{-0.4cm}
\section{Minimum Ambiguity Association} \label{consecutive-ambiguity}
\begin{lemma}\label{consec-amb-lemma}
In the ideal detection scenario (i.e., no miss or false alarms), the number of candidate locations generated between a pair of sensors is minimum for consecutive sensors.
\end{lemma}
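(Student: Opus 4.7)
My plan is to prove the lemma by establishing a monotonicity property: for fixed sensor $i$, the count $N(i,j)$ of candidate locations between sensors $i$ and $j$ is non-decreasing in the baseline $l_{ij}$ as $l_j$ moves away from $l_i$. Since any non-consecutive pair $(i,j)$ has strictly larger baseline than at least one intervening consecutive pair, this immediately yields that the minimum of $N(i,j)$ over all sensor pairs is attained at a consecutive pair.

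First I would observe that every correct pairing $(k,k)$ satisfies the geometric constraint $C_{\textrm{G}}$ by the triangle inequality on the triangle with vertices $(l_i,0)$, $(l_j,0)$, and target $k$, namely $|r_i^k - r_j^k| \leq l_{ij} \leq r_i^k + r_j^k$. Hence in the ideal scenario every sensor pair contributes at least $\NT$ correct candidates, and only the false candidates from incorrect pairings $(k,l)$ with $k\neq l$ can vary with the baseline.

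The crux of the argument is a Lipschitz estimate on range as a function of sensor position. For any target $l$ and two sensor positions $l_j < l_{j'}$, the triangle inequality applied to the configuration $(l_j,0)$, $(l_{j'},0)$, $(x^l,y^l)$ gives $|r_j^l - r_{j'}^l| \leq l_{jj'}$. Now suppose an incorrect pair $(k,l)$ is a candidate for sensors $(i,j)$, so $|r_i^k - r_j^l| < l_{ij}$. Then
\begin{align*}
|r_i^k - r_{j'}^l| \leq |r_i^k - r_j^l| + |r_j^l - r_{j'}^l| < l_{ij} + l_{jj'} = l_{ij'},
\end{align*}
so the difference half of $C_{\textrm{G}}$ is preserved at the wider baseline. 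Hence, as $l_j$ moves farther from $l_i$, no $(k,l)$ pair is ever removed from the candidate set while additional pairs may enter. By symmetry, the same holds when $i$ moves farther from a fixed $j$, so $N(i,j)$ is monotone non-decreasing in $l_{ij}$ and its minimum over all pairs is attained at a consecutive pair.

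The main obstacle is the second half of $C_{\textrm{G}}$, the sum condition $r_i^k + r_j^l > l_{ij}$, which is not automatically preserved when the baseline grows since the right-hand side grows faster than the left can under the Lipschitz bound alone. In the ideal-detection setting of the lemma, however, targets lie sufficiently far from the sensor axis that $r_i^k + r_j^l \geq y^k + y^l$ comfortably dominates any baseline of interest, so the sum condition is slack and only the difference condition is binding. Under this implicit geometric assumption in the single-snapshot setup, the monotonicity argument goes through cleanly and establishes the lemma.
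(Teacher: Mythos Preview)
Your approach is essentially the paper's: both use the triangle inequality $|r_j^l - r_k^l| \leq l_{jk}$ (your ``Lipschitz estimate'') to show that any incorrect pairing satisfying $C_{\textrm{G}}$ at a consecutive pair $(i,j)$ still satisfies it at the wider pair $(i,k)$, which you phrase as monotonicity of $N(i,j)$ in the baseline while the paper states the implication directly. You are in fact more careful than the paper on one point: you correctly flag that the sum half $r_i^p + r_k^{\bar q} > l_{ik}$ does not follow from the Lipschitz bound alone and handle it via an implicit assumption on target geometry, whereas the paper simply asserts both halves of $C_{\textrm{G}}$ carry over without isolating this issue.
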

\begin{proof}
Recall that candidate locations are generated when range perceived at a pair of sensors satisfy conditions in \eqref{prune1}. For a candidate, $\bm{z}_{ij}^{pq}$ generated by incorrectly associated observations, $\bm{\theta}_i^{p}, \bm{\theta}_j^{q}$, across consecutive sensors $i,j$, the following relations hold,
\begin{align}
\label{C1ghost}
r_i^p-r_j^{q} < l_{ij} ,\quad r_i^p+r_j^{q} > l_{ij} \,.
\end{align}
Now consider $\bar{q}^{th}$ observation at sensor $k$ adjacent to sensor $j$ which corresponds to same target as $\bm{\theta}_j^{q}$, the following hold,
\begin{align*}
r_j^{{q}}-r_k^{\bar{q}} &< l_{jk} \tag{using \eqref{prune1}} \\
r_j^{{q}}+l_{jk} &> r_k^{\bar{q}} \tag{$l_{jk}\geq0$, Triangle inequality}
\end{align*}
Using these along with \eqref{C1ghost} we obtain,
\begin{align*}
r_i^p-r_k^{\bar{q}} < l_{ik} ,\quad r_i^p+r_k^{\bar{q}} > l_{ik} 
\end{align*}
Hence any candidate produced between consecutive sensors $i,j$ also generates a candidate between sensors $i,k$ by skipping over intermediate sensor $j$. Hence,
\begin{align*}
\sum_{p=1}^{n(\Theta_i)} \sum_{q=1}^{n(\Theta_{i+1})} n(\bm{z}_{i,i+1}^{pq}) \leq \sum_{p=1}^{n(\Theta_i)} \sum_{q=1}^{n(\Theta_k)} n(\bm{z}_{i,k}^{pq})
\end{align*}
Therefore, the number of candidates generated between a pair of sensors is minimum for consecutive sensors.
\end{proof}
Association complexity is due to the presence of unwanted candidate targets which need to be discarded based on their likelihood. When a target is observed at all sensors, it is sufficient to associate observations along consecutive sensors. 
Lemma \ref{consec-amb-lemma} states that the association of observations along consecutive sensors generates the lowest number of phantoms during graph search. Hence, the number of potential ambiguities is minimized when the graph search procedure is conducted across consecutive sensors first. 


\bibliographystyle{IEEEtran}
\bibliography{articles}

%

\begin{IEEEbiography}[{\includegraphics[width=1in,height=1.25in,clip,keepaspectratio]{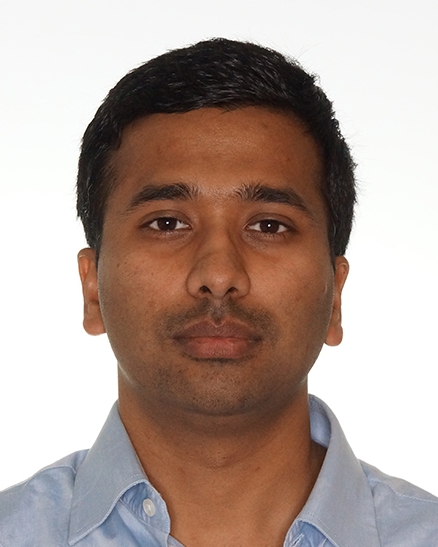}}]{Anant Gupta} received the B.Tech. degree in electronics and electrical communication engineering and the M.Tech. degree in telecommunication systems engineering from IIT Kharagpur in 2013. He received the M.S. and Ph.D. degree in electrical and computer engineering from the University of California at Santa Barbara (UCSB) in 2016 and 2020 respectively. He is currently working as Senior Engineer in the wireless R\&D team at Qualcomm, San Diego. His research interests include wireless sensing, signal processing and machine learning.
\end{IEEEbiography}

\begin{IEEEbiography}[{\includegraphics[width=1in,height=1.25in,clip,keepaspectratio]{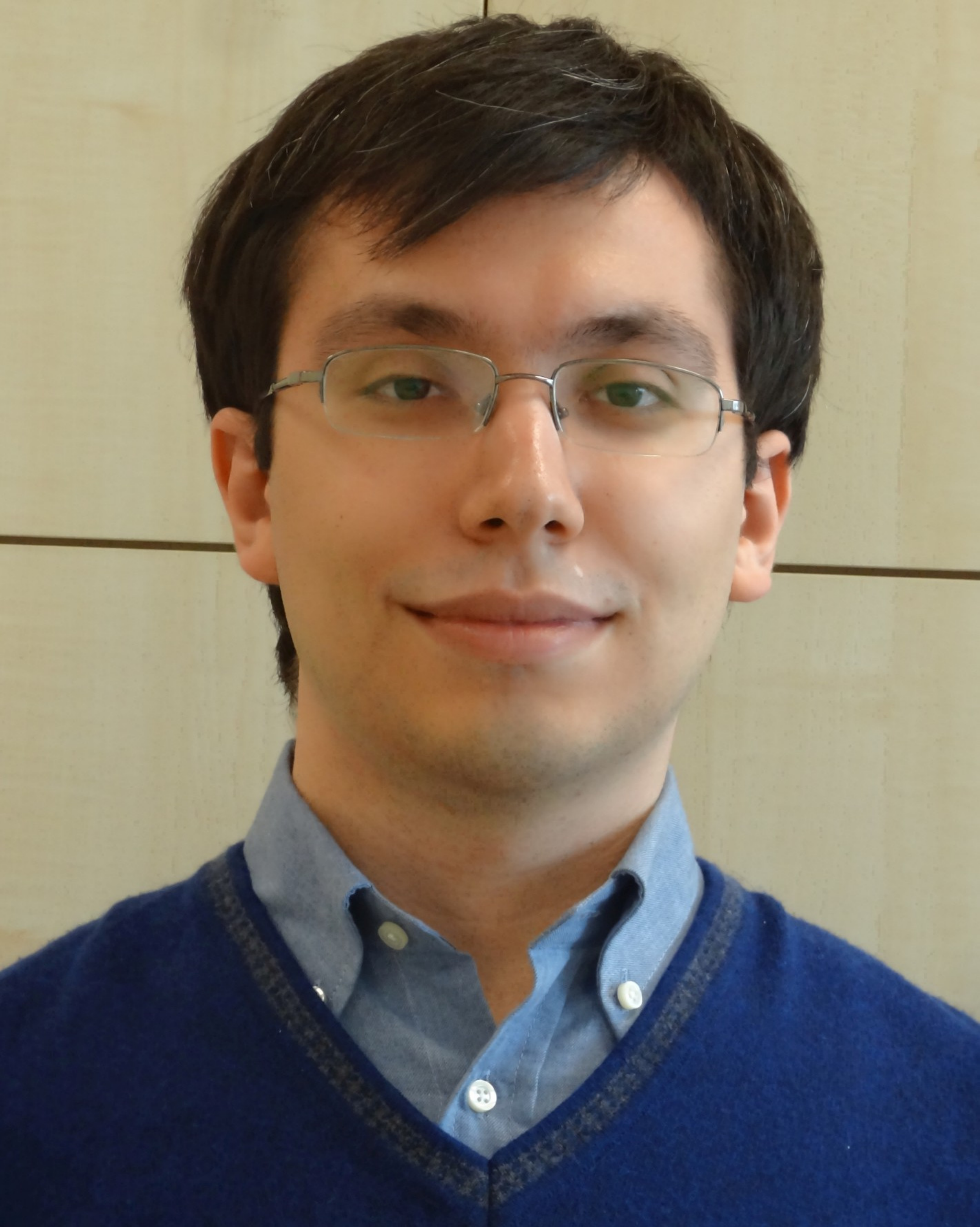}}]{Ahmet Dundar Sezer} received the B.S., M.S., and Ph.D. degrees in Electrical and Electronics Engineering from Bilkent University, Ankara, Turkey, in 2011, 2013, and 2018, respectively. He is currently a Post-Doctoral Researcher at the University of California, Santa Barbara, CA, USA. His current research interests include signal processing, wireless communications, and optimization.
\end{IEEEbiography}

\begin{IEEEbiography}[{\includegraphics[width=1in,height=1.25in,clip,keepaspectratio]{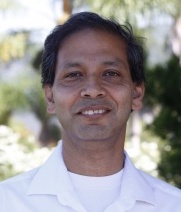}}]
{Upamanyu Madhow} is Professor of Electrical and Computer Engineering
at the University of California, Santa Barbara. 
His current research interests focus on next generation communication, sensing and inference infrastructures centered around millimeter wave systems, and on robust machine learning. 
He received his bachelor's degree in electrical engineering from the Indian Institute of Technology, Kanpur, in 1985, and his Ph. D. degree
in electrical engineering from the University of Illinois,
Urbana-Champaign in 1990. He has worked as a research scientist at
Bell Communications Research, Morristown, NJ, and as a faculty at the
University of Illinois, Urbana-Champaign.  Dr. Madhow is a recipient of the 1996 NSF CAREER award, and co-recipient of the 2012 
IEEE Marconi prize paper award in wireless communications. He has served as Associate Editor for the IEEE Transactions on
Communications, the IEEE Transactions on Information Theory, and the 
IEEE Transactions on Information Forensics and Security. He is 
the author of two textbooks published by Cambridge University Press, {\it Fundamentals of Digital Communication} (2008) and {\it Introduction to Communication Systems} (2014).
\end{IEEEbiography}



\vfill


\end{document}